\begin{document}

\title{Crypto-Assisted Graph Degree Sequence Release under Local Differential Privacy}

\author{Xiaojian Zhang}
\email{xjzhang82@alu.ruc.edu.cn}
\affiliation{%
  \institution{Henan University of Economics and Law}
  \city{Zhengzhou}
  \state{Henan}
  \country{China}
}

\author{Junqing Wang}
\affiliation{%
  \institution{Guangzhou University}
  \city{Guangzhou}
  \country{China}}
\email{csjunqing@e.gzhu.edu.cn}

\author{Kerui Chen}
\affiliation{%
  \institution{Henan University of Economics and Law}
  \city{Zhengzhou}
  \country{China}}

\author{Peiyuan Zhao}
\affiliation{%
 \institution{Henan University of Economics and Law}
 \city{Zhengzhou}
 \state{Henan}
 \country{China}}

\author{Huiyuan Bai}
\affiliation{%
 \institution{Henan University of Economics and Law}
 \city{Zhengzhou}
 \state{Henan}
 \country{China}}




\renewcommand{\shortauthors}{Xiaojian Zhang et al.}

\begin{abstract}
Given a graph $G$ defined in a domain $\mathcal{G}$, we investigate locally differentially private mechanisms to release a degree sequence on $\mathcal{G}$ that accurately approximates the actual degree distribution. Existing solutions for this problem mostly use graph projection techniques based on edge deletion process, using a threshold parameter $\theta$ to bound node degrees. However, this approach presents a fundamental trade-off in threshold parameter selection. While large $\theta$ values introduce substantial noise in the released degree sequence, small $\theta$ values result in more edges removed than necessary. Furthermore, $\theta$ selection leads to an excessive communication cost. To remedy existing solutions' deficiencies, we present CADR-LDP, an efficient framework incorporating encryption techniques and differentially private mechanisms to release the degree sequence. In CADR-LDP, we first use the crypto-assisted Optimal-$\theta$-Selection method to select the optimal parameter with a low communication cost. Then, we use the LPEA-LOW method to add some edges for each node with the edge addition process in local projection. LPEA-LOW prioritizes the projection with low-degree nodes, which can retain more edges for such nodes and reduce the projection error. Theoretical analysis shows that CADR-LDP satisfies $\epsilon$-node local differential privacy. The experimental results on eight graph datasets show that our solution outperforms existing methods.
\end{abstract}

\begin{CCSXML}
<ccs2012>
 <concept>
  <concept_id>00000000.0000000.0000000</concept_id>
  <concept_desc>Do Not Use This Code, Generate the Correct Terms for Your Paper</concept_desc>
  <concept_significance>500</concept_significance>
 </concept>
 <concept>
  <concept_id>00000000.00000000.00000000</concept_id>
  <concept_desc>Do Not Use This Code, Generate the Correct Terms for Your Paper</concept_desc>
  <concept_significance>300</concept_significance>
 </concept>
 <concept>
  <concept_id>00000000.00000000.00000000</concept_id>
  <concept_desc>Do Not Use This Code, Generate the Correct Terms for Your Paper</concept_desc>
  <concept_significance>100</concept_significance>
 </concept>
 <concept>
  <concept_id>00000000.00000000.00000000</concept_id>
  <concept_desc>Do Not Use This Code, Generate the Correct Terms for Your Paper</concept_desc>
  <concept_significance>100</concept_significance>
 </concept>
</ccs2012>
\end{CCSXML}

\ccsdesc[500]{Do Not Use This Code~Generate the Correct Terms for Your Paper}
\ccsdesc[300]{Do Not Use This Code~Generate the Correct Terms for Your Paper}
\ccsdesc{Do Not Use This Code~Generate the Correct Terms for Your Paper}
\ccsdesc[100]{Do Not Use This Code~Generate the Correct Terms for Your Paper}

\keywords{Differential Privacy, Node Local Differential Privacy, Secure Aggregation, Degree Sequence Release, Graph Projection}

\received{20 February 2007}
\received[revised]{12 March 2009}
\received[accepted]{5 June 2009}

\maketitle

\section{Introduction}
In graph data, the degree sequence aims to describe the degree probability distribution, which provides insight into the structure and properties of the graph. 
However, the release of the graph degree sequence is carried out on sensitive graph data, which could be leaked through the publication results \cite{cite1, cite2, cite3}. Thus, models that can release the degree sequence while still preserving the privacy of individuals in the graph are needed. A promising model is local differential privacy (LDP)\cite{cite4, cite5}, where each user locally encodes and perturbs their data before submitting it to a collector. This model eliminates the need for a trusted collector, empowering users to retain control over their actual degree information. Two natural LDP variants are particularly suited for graph data: edge-LDP \cite{cite6} and node-LDP \cite{cite7}. Intuitively, the former protects relationships among nodes, and the latter provides a stronger privacy guarantee by protecting each individual and their associated relationships. While node-LDP offers a more robust privacy guarantee than edge-LDP, achieving it is more challenging. This is because, in the worst case, removing a single node can impact up to $n-1$ other nodes (where $n$ is the total number of nodes), which may lead to high sensitivity of degree sequence release, particularly in large graphs. If the Laplace mechanism \cite{cite8, cite9}(e.g., adding noise sampled from Lap($\frac{n-1}{\epsilon}$)) is applied to protect degree counts, the resulting perturbation may severely distort the true values.
For this reason, most existing methods \cite{cite6, cite7, cite10, cite11, cite12, cite13, cite14, cite15, cite16, cite17} adopt edge-LDP to protect the publication of graph data. 

To address the high sensitivity, a key technique to satisfy the node-LDP is that of $\text{\tt{graph projection}}$, which projects a graph into a $\theta$-degree-bounded graph whose maximum degree is no more than $\theta$ \cite{cite2, cite18, cite19, cite20}. Current studies often employ edge deletion or edge addition processes to limit the degree count of each node \cite{cite18, cite19, cite20}. A critical challenge is to preserve as much information about edges as possible in the projection process while releasing the degree sequence from a projected graph. However, existing projection solutions \cite{cite2, cite18} are only proposed for the central differential privacy model, which cannot be used to edge-LDP directly. This is because in the central setting, the collector can easily estimate the threshold value $\theta$ in terms of the global view, while in the local setting, since each user can only see its degree information, but not its neighboring information. It is challenging to estimate $\theta$ about the entire graph. Existing studies (e.g. $\text{\tt{EDGE-REMOVE}}$ \cite{cite19, cite20}) often rely on an edge deletion process to release the degree sequence with edge-LDP. Given the parameter $\theta$, in
the edge deletion process, each user $u_i$ randomly samples $d_i-\theta$ edges among its neighboring nodes to be deleted, where $d_i$ denotes the degree of $u_i$. The edge deletion process, however, removes significantly
more edges than necessary. Three limitations in the solutions based on the edge deletion process have yet to be solved: (1) it is difficult to obtain the optimal parameter $\theta$. A larger $\theta$ causes noise with larger magnitude to be added, while a lower $\theta$ leads to more edges being pruned; (2) The existing solutions do not consider the willingness of the users whose edges are deleted, resulting in excessive edge deletions. And (3) those methods fail to account for the ordering of degree counts among their neighboring nodes. To illustrate these limitations, Figure 1 shows an example using the edge deletion process. 
\vspace{-8pt}
\begin{figure}[htbp]
  \centering
  \includegraphics[scale=1.4]{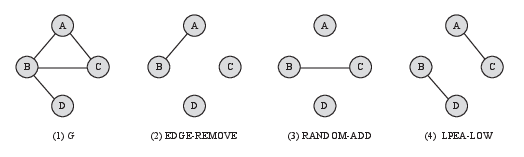}
  \caption{Comparison of EDGE-REMOVE, RANDOM-ADD, and LPEA-LOW}
\end{figure}

\textit{Example 1. Given $\theta = 1$, the degree of node B is 3. We need to delete two edges randomly from the set $\{$BA, BC, BD$\}$. If edges BC and BD are removed, the bounded graph is shown as Fig. 1(2). Node D’s degree reaches zero after removing BD. Nevertheless, given its single edge (BD), node D would resist this deletion based on its edge deletion willingness.}

Unlike existing node-LDP approaches that rely on edge deletion projection for degree sequence release, our analysis shows that random edge addition projection (e.g., $\text{\tt{RANDOM-ADD}}$ \cite{cite2}) achieves higher accuracy than edge deletion. Based on the Facebook dataset, we use the mean absolute error (denoted as MSE) and $\frac{|E'|}{|E|}$ ratio to measure the accuracy of the above two processes, where $E'$ and $E$ represent the edge sets of the projected graph and original graph, respectively. As shown in Fig. 2, when the projection parameter $\theta$ varies from 1 to 50, the solutions based on the edge addition process demonstrate superior performance over the edge deletion-based methods in terms of both MAE and $\frac{|E'|}{|E|}$ ratio.

\begin{figure}[htbp]
  \centering
  \includegraphics[scale=1.5]{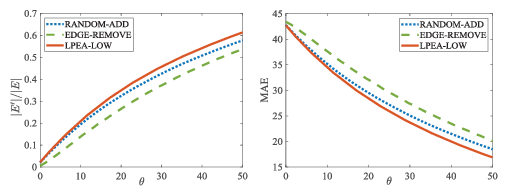}
  \caption{Comparison of EDGE-REMOVE, RANDOM-ADD, and LPEA-LOW on Facebook dataset}
\end{figure}

Although solutions with the edge addition process can enhance degree distribution accuracy, they do not consider the order of edge additions; instead, they perform them randomly. Our analysis reveals that initiating edge additions from low-degree nodes preserves more functional edges in the projected graph. Based on this analysis, we propose $\text{\tt{LPEA-LOW}}$, a local edge addition projection method that prioritizes low-degree nodes and sequentially adds edges until $\theta$ is met. Fig. 2 shows that $\text{\tt{LPEA-LOW}}$ is better than $\text{\tt{RANDOM-ADD}}$ and $\text{\tt{EDGE-REMOVE}}$. Based on Fig. 1, we give the second example to demonstrate the advantages of $\text{\tt{LPEA-LOW}}$, shown as example 2.

\textit{Example 2. Given $\theta$ = 1, assuming the order in which nodes start local projection is: B$\rightarrow$C$\rightarrow$A$\rightarrow$D.  In $\text{\tt{EDGE-REMOVE}}$, if node B randomly deletes edges BC and BD to reach $\theta$ = 1, node B completes its projection. Node C then begins projection, but no deletion is needed since it only has one edge. Node A starts projection and randomly deletes edge AC to reach $\theta$ = 1, after which node A completes its projection. Node D has no edges, so no deletion is required. The result of $\text{\tt{EDGE-REMOVE}}$ projection is shown in Fig. 1(2), where only edge BA remains. In $\text{\tt{RANDOM-ADD}}$, if node B randomly adds edge BC, nodes B and C meet $\theta$ = 1 and complete their projections. Although nodes A and D do not meet the $\theta$=1 condition, edges BA, AC, and BD can no longer be added. The result of  $\text{\tt{RANDOM-ADD}}$ projection is shown in Fig. 1(3), where only edge BC remains. In $\text{\tt{LPEA-LOW}}$, node B does not randomly select edges to add but preferentially chooses to connect with its low-degree neighbor D. Similarly, node C preferentially connects with its lower-degree neighbor A. After meeting the $\theta$ = 1, the $\text{\tt{LPEA-LOW}}$ projection result is shown in Fig. 1(4). This example demonstrates that, under the same conditions, our method preserves more edges.}

Since node-LDP-based graph projection lacks a global view, the process depends on mutual negotiation between nodes and message exchange between nodes and the collector to determine the degree count ordering of each node's neighbors. However, direct negotiation and exchange will lead to several privacy risks: (1) the selection of $\theta$ may leak privacy, as the collector can only obtain the optimal $\theta$ by aggregating the actual degree counts of each node. (2) The edge deletion and edge addition processes may introduce a privacy leak, as these operations can reveal whether a node's degree is greater than $\theta$. And (3) the projected degree may leak privacy. To address these issues, $\text{\tt{EDGE-REMOVE}}$ uses the WRR \cite{cite21} mechanism, OPE \cite{cite22} encryption protocol, and secure aggregation technique to release the degree sequence under node-LDP. This method, however, may result in high computational and communication costs. This is because the aggregation in each round requires the local projection. Inspired by the edge addition process from the nodes with low degrees, which can improve the accuracy of releasing the degree sequence, we propose a crypto-assisted framework, called $\text{\tt{CADR-LDP}}$, to release the degree sequence. Our main contributions are threefold:
\begin{itemize}
\item To overcome the excessive edge loss in existing edge deletion-based graph projection methods, we propose $\text{\tt{LPEA-LOW}}$ that employs the edge addition process to bound the original graph. Our core idea lies in prioritizing edge additions for low-degree nodes, which is inspired by the fact that most graphs follow a long-tail degree distribution where low-degree nodes dominate. Thereby, our method can preserve more edges among these nodes. To prevent privacy leakage during the edge addition process, we incorporate the WRR, exponential, and Laplace mechanisms to protect node degree information.
\item To derive the optimal projection parameter $\theta$, we propose two crypt-assisted optimization methods, Optimal-$\theta$-Selection-by-Sum and Optimal-$\theta$-Selection-by-Deviation. These two approaches employ distinct optimization strategies: (1) empirical minimization of the summed error function, and (2) analytical derivation via error function differentiation, to determine $\theta$. Notably, the latter method achieves lower communication cost than the former.
\item Extensive experiments on real graph datasets demonstrate that our proposed methods can achieve a better trade-off between privacy and utility across distinct graph utility metrics.
\end{itemize}

\section{Preliminaries}
This section introduces the problem definition, node local differential privacy, and secure aggregation.

\subsection{Problem Definition}
This paper focuses on an undirected graph with no additional attributes on edges and nodes. Given an input graph $G = (V, E)$, where $V = \{v_1, v_2, ..., v_n\}$ denotes the set of nodes, and $E\subseteq V\times V$ denotes the set of edges, we want to release a degree sequence under node-LDP. Let $B_i = \{b_{i1}, b_{i2}, ..., b_{in}\}$ be the adjacent bit vector of the node $v_i$, where $b_{ij} \in \{0, 1\}$ indicates connectivity to the node. The degree $d_i$ of node $v_i$ is given by $d_i = \sum_{j=1}^{n}b_{ij}$. The collector aggregates a perturbed degree sequence $s = \{d_1, d_2, ..., d_n\}$ from each local user and releases the degree distribution $dist(G)$. We adopt two common utility metrics (e.g., mean squared error (MSE) and mean absolute error (MAE)) to evaluate the accuracy of our solutions. MSE and MAE are defined as follows:
\begin{equation}
  MSE(s, s') = \frac{1}{n}\sum_{i=1}^{n}(d_i - \widetilde{d_i})^2
\end{equation}

\begin{equation}
  MAE(s, s') = \frac{1}{n}\sum_{i=1}^{n}|d_i - \widetilde{d_i}|
\end{equation}

where $d_i$ and $\widetilde{d_i}$ denote the original degree and noise degree, respectively.

\subsection{Differential Privacy}
In the context of degree sequence publication, node-LDP provides a mechanism $\mathcal{M}$ that enables users to perturb their degree before sending it to an untrusted collector. By ensuring the perturbed degrees satisfy $\epsilon$-node-LDP, the collector cannot distinguish a degree from any other possible degree with high confidence.

\begin{definition}\label{def1}
\textbf{(node-LDP)} \cite{cite7}
A random algorithm $\mathcal{M}$ satisfies $\epsilon$-node-LDP, iff only any $v_i, v_j \in V$, the corresponding to adjacent bit vectors $B_i$ and $B_j$ that differ at most $n$ bits, and any output $o\in O$,

\begin{equation}  \label{eq1}
\Pr \left[ {\mathcal{M}\left( {{B_i}} \right) = o} \right] \le \exp \left( \varepsilon  \right) \times \Pr \left[ {\mathcal{M}\left( {{B_j}} \right) = o} \right]
\end{equation}

where the parameter $\epsilon$ is referred to as the privacy budget, which is used to balance the trade-off between utility and privacy in the random algorithm $\mathcal{M}$, a smaller value of $\epsilon$ implies a higher level of privacy protection. The $O$ denotes the output domain of $\mathcal{M}$.
\end{definition}

\textbf{Laplace Mechanism\cite{cite8, cite9} .} One way to satisfy differential privacy is to add noise to the output of a query. In the Laplace mechanism, in order to release $f(G)$ where $f: \mathbb{D}\to \mathbb{R}^d$ while satisfying $\epsilon$-differential privacy, one releases
\begin{equation}
  \mathcal{M}(G) = f(G) + \text{Lap}(\frac{\Delta f}{\epsilon}) \notag
\end{equation}

where $\Delta f = \underset{G\simeq G'}{\text{max}} \|f(G) - f(G')\|_1$ and $\text{Pr}[\text{Lap}(\beta) = x] = \frac{1}{2\beta}e^{-|x|/\beta}$.

\textbf{Exponential Mechanism \cite{cite23}.} This mechanism was designed for situations where we want to choose the best response, but adding noise directly to the computed quantity. Given a dataset $G$, and a utility function $u: \mathbb{G}\times O \to \mathbb{R}$, which maps dataset/output pairs to utility scores $u(G, o)$. Intuitively, for a fixed graph $G$, the user prefers that this mechanism outputs some element $o$ of $O$ with the maximum possible utility score. That is, the mechanism outputs $o$ with probability proportional to $\text{exp}(\frac{\epsilon u(G,o)}{2\Delta u})$, where $\Delta u = \underset{\forall o, G,G'}{\text{max}}|u(G, o) - u(G', o)|$ is the sensitivity of the score function. This mechanism preserves $\epsilon$-differential privacy.

\textbf{Warner Randomized Response\cite{cite21}: WRR.} In local differential privacy, this mechanism is a building block for generating a random value for a sensitive Boolean question. Specifically, each user sends the true value with probability $p$ and the false value with probability $1-p$. To adapt $\text{\tt{WRR}}$ to satisfy $\epsilon$-local differential privacy, we set $p$ as follows:

\begin{equation}
p = \frac{e^\epsilon}{e^\epsilon + 1} \notag
\end{equation}

\textbf{Composition Property\cite{cite23}.} This property guarantees that for any sequence of computations $\mathcal{M}_1, \mathcal{M}_2, ..., \mathcal{M}_k$, if each $\mathcal{M}_i$ is $\epsilon_i$-differential privacy, then releasing the results of all $\mathcal{M}_1, \mathcal{M}_2, ..., \mathcal{M}_k$ is $\sum_{i}^{k}\epsilon_i$-differential privacy.

\subsection{Secure Aggregation}
Formally, Secure Aggregation \cite{cite24} constitutes a private MPC protocol where users transmit masked inputs under additive secret sharing, ensuring the collector can only learn the aggregated sum of user data. Mask generation follows the Diffie-Hellman (denoted as DH\cite{cite25}) key exchange protocol. The DH protocol is parameterized as the triple: $DH = (KA. param, KA. gen, KA. agree)$, where $KA.param(1^\lambda)\to (Q, g, q)$ outputs a group $G$ of order $q$ with generator $g$, given a security parameter $\lambda$; $KA.gen(Q, g, q)\to (sk, pk)$ generates private key $sk = a \xleftarrow{\textdollar}\mathbb{Z}_q$ and public key $pk=g^a$; $KA.agree(sk, pk)\to k_{i,j}$ computes shared secret $k_{i,j}=pk_{j}^{sk_i}=g^{a_{i}a_j}$. $k_{i,j}$ is derived using the same public parameters $(Q, g, q)$, which ensures $k_{i,j} = k_{j,i}$. Assuming $i<j$, user $i$ adds $k_{i,j}$ to its value $x_i$, while user $j$ subtracts $k_{j, i}$ from its value $x_j$, which makes the mask eliminated in the final aggregated result. Extending this idea to $n$ users, the masking operation of user $i$ is given as follows:
\begin{equation}
m_i = \underset{1\leq i<j\leq n}{\sum}k_{i,j} - \underset{1\leq j<i\leq n}{\sum}k_{j,i}
\end{equation}

Therefore, each user performs the local encrypted computation as $y_i = x_i + m_i$, where $x_i$ is the user $i$'s original value and $m_i$ is the mask. The collector aggregates the ciphertexts from all users and computes the sum $z$ as follows:

\begin{equation}
z =  \sum_{i=1}^{n} y_i = \sum_{i=1}^{n}(x_i + \underset{1\leq i<j\leq n}{\sum}k_{i,j} - \underset{1\leq j<i\leq n}{\sum}k_{j,i}) = \sum_{i=1}^{n}x_i
\end{equation}

\section{CADR-LDP Framework}
In this section, we study the degree sequence problem in the context of node-LDP. In section 3.1, we begin by introducing our proposed framework, CADR-LDP, short for Crypto-Assisted Degree Sequence Release under node-local Differential Privacy. We then describe the algorithm for selecting the optimal parameter $\theta$ and degree order encoding mechanism in Sections 3.2 and 3.3, respectively. Section 3.4 describes the local projection method, and Section 3.5 gives the degree sequence releasing method.

\begin{algorithm}[!h]
	\caption{CADR-LDP}
	\begin{algorithmic}[1]
            \REQUIRE Set of nodes $V$, adjacency vectors of all nodes $B = \{B_1, \dots, B_n\}$, privacy budget $\varepsilon$, privacy budget allocation parameter $\alpha$, security parameter $\lambda$, minimum degree $d_{min}$ and maximum degree $d_{max}$ of graph $G$, candidate parameter domain $K$, partition size $P_{size}$.
            \ENSURE Degree sequence $DS(G')$.
            \STATE $\theta \leftarrow $\textbf{Optimal-$\theta$-Selection}$(B,\varepsilon,\lambda,d_{min},d_{max},K)$;  
            \STATE Collector computes the number of partitions $P_{num} \leftarrow \left\lceil {{{\left( {{d_{\max }} - {d_{\min }}} \right)} \mathord{\left/
            {\vphantom {{\left( {{d_{\max }} - {d_{\min }}} \right)} {{P_{size}}}}} \right.
            \kern-\nulldelimiterspace} {{P_{size}}}}} \right\rceil$;
            \STATE Collector computes partitions $P \leftarrow \{[d_{min}, d_{min} + P_{size}], \dots, [d_{min} + (P_{num} - 1)P_{size}, d_{max}]\}$;
            \STATE Collector computes the encoding range $O \leftarrow \{1, \dots, P_{num}\}$;
            \STATE Collector computes the sensitivity $\Delta u = d_{max} - d_{min}$;
            \STATE Collector shares $P, O, \Delta u$ with all users in $V$;
            \FOR{$v_i \in V$}
                \STATE User $i$ computes $o_i \leftarrow \textbf{NDOE}(B_i, \varepsilon, \alpha, P, O, \Delta u)$; // User $i$ computes its degree order $o_i$;
                \STATE User $i$ sends $o_i$ to its neighboring nodes;
            \ENDFOR
            \FOR{$v_i \in V$}
                \STATE User $i$ collects the degree orders from its neighbors;
                \STATE User $i$ performs local projection $\overline{B}_i \leftarrow \textbf{LPEA-LOW}(B_i, \varepsilon, \alpha, \theta)$;
            \ENDFOR
            \STATE $\textit{DS}(G') \leftarrow \textbf{DSR}( \{ {\overline B _1},...,{\overline B _n} \}, \varepsilon, \alpha, \theta )$;
            \RETURN $\textit{DS}(G')$.
	\end{algorithmic}
\end{algorithm}

\subsection{Overview of CADR-LDP}
In CADR-LDP, we consider three principles: (1) The local projection algorithm maximizes edge retention to reduce the releasing error; (2) The parameter selection mechanism optimizes $\theta$ with low computational and communication cost; (3) Both principles (1) and (2) must be implemented without compromising user privacy. The framework is shown in Fig. 3, which includes Optimal-$\theta$-selection, NODR, LEPA-LOW, and DSR. We provide an overview of CADR-LDP in Algorithm 1. In CADR-LDP, we propose an improved crypto-assisted solution called Optimal-$\theta$-Selection for selecting parameter $\theta$ (Section 3.2). We rely on the secure aggregation technique to design the loss function to prevent leaking the order information of individual utility loss in this solution. Before the projection, each user needs to obtain the degree order information of its neighbors. Based on OP$\epsilon$c \cite{cite26}, we propose a neighbor's degree order encoding mechanism, NDOE (Section 3.3), which can preserve the degree order while protecting users' degree counts. Then the collector depends on the partition size $P_{size}$ to divide the domain of actual degree count into $P_{num}$ partitions. Based on $P_{num}$, $d_{min}$ and $d_{max}$, the sensitivity $\Delta u$ and degree order encoding range $O$ are calculated. The collector shares the above parameters with all node users. After obtaining the degree order information using NDOE, each user sends it to its neighbors. Once each user receives the order information from its neighbors, she uses the LPEA-LOW (Section 3.4) to project her degree count. Finally, the collector uses the DSR (Section 3.5) to release the degree sequence.

\begin{figure}[htbp]
  \centering
  \includegraphics[width=\textwidth]{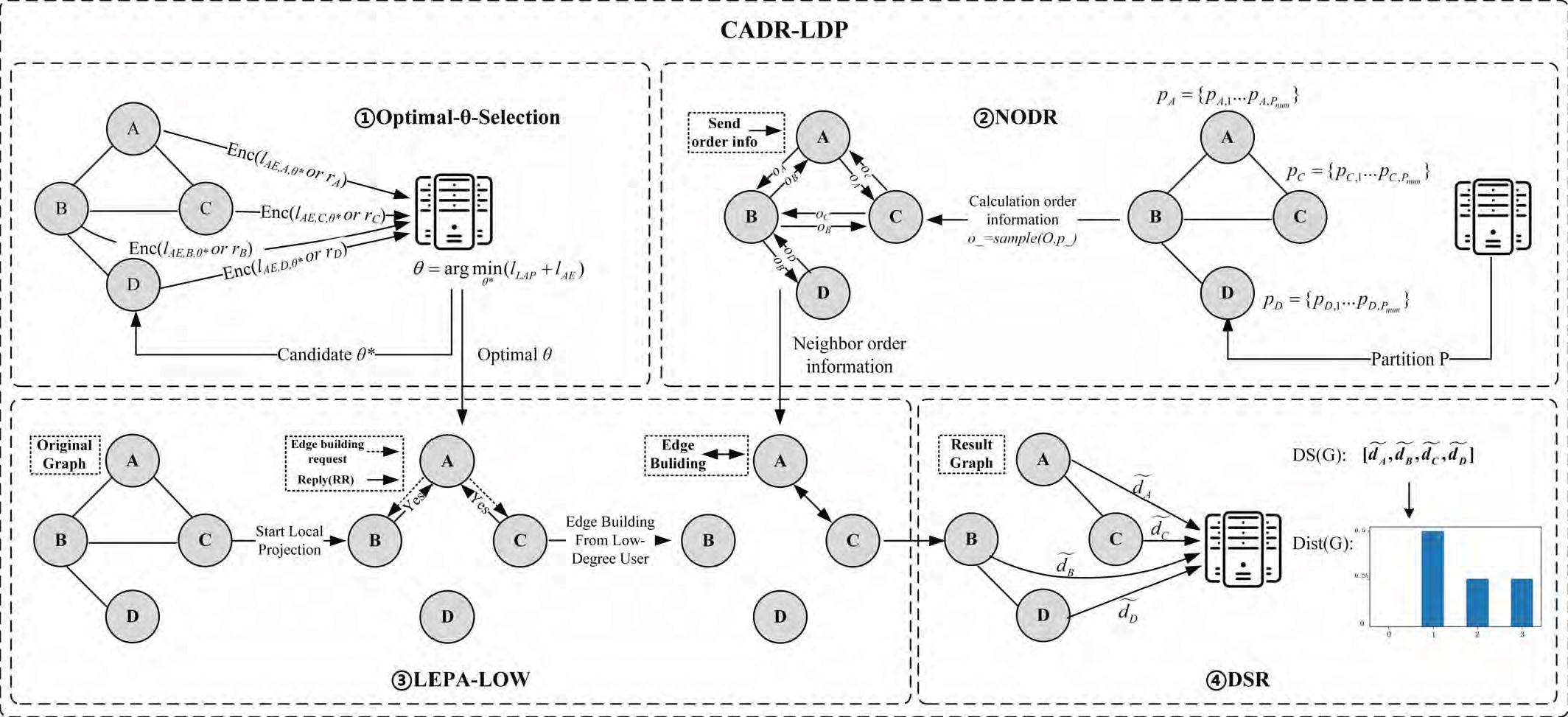}
  \caption{The framework of CADR-LDP}
\end{figure}

\subsection{Optimal Projection Parameter Selection}
As described in Algorithm 1, the local projection parameter $\theta$ directly determines the final accuracy of the degree sequence release.  A large $\theta$ value leads to excessive noise error, while a small $\theta$ value removes too many edges. Therefore, selecting the optimal $\theta$  is a significant challenge in this paper. Using the projection parameter $\theta$  to perform the edge addition process results in the Laplace noise error introduced by $\theta$  and the absolute error caused by edge addition. We formulate an optimization function incorporating both errors to derive $\theta$. Let $\mathbb{E(\cdot)}$ be the expected value. Formally, the optimization function is defined as:

\begin{equation}  \label{eq8}{L_{total}} = {l_{LAP}} + {l_{AE}} = \sum\limits_{i = 1}^n {({l_{LAP,i}} + {l_{AE,i}})}  = \sum\limits_{i = 1}^n {\left( {{\mathbb{E}}|\widetilde {{d_i}} - \overline {{d_i}} | + {\mathbb{E}}|\overline {{d_i}}  - {d_i}|} \right)}  = \sum\limits_{i = 1}^n {\left( {\frac{\theta }{\varepsilon } + |\mathop {{d_i}}\limits^ -   - {d_i}|} \right)} \end{equation}

where $L_{total}$ denotes the total error, $l_{AE}$ represents the absolute error caused by local projection, and $l_{LAP}$ denotes the noise error caused by the Laplace mechanism. We have $l_{AE} = \sum_{i = 1}^{n} l_{AE, i}$, and $l_{LAP} = \sum_{i = 1}^{n} l_{LAP, i}$, where $l_{AE, i} = \mathbb{E}|\overline{d_i}-d_i|$, $l_{LAP, i} = \mathbb{E}|\widetilde{d_i}-\overline{d_i}|$, $d_i$, $\overline{d_i}$, and $\widetilde{d_i}$ denote the original degree, the projected degree and the noise degree, respectively. According to the Laplace mechanism, we have $l_{LAP, i} = \mathbb{E}|\widetilde{d_i}-\overline{d_i}| = \frac{\theta}{\epsilon}$, which is the noise error introduced by adding Laplace noise $\text{Lap}(\frac{\theta}{\epsilon})$ to the degree $d_i$ of node $v_i$.

However, since each user lacks the global view of the entire graph, she cannot obtain the degree information of all $n$ nodes in Formula (6). This limitation prevents local computation of $\theta$ at each node side. To address this issue, we employ the collector to aggregate the local errors generated by each user and derive the global parameter $\theta$. In Formula (6), the value $l_{AE, i}$ is sensitive data, on which the collector could rely to infer the actual degree $d_i$ of node $v_i$ in terms of $d_i=l_{AE, i} + \theta$. The method described in \cite{cite19} uses Laplace noise $Lap(\frac{n-1-\theta}{\epsilon})$ with sensitivity $n$-1-$\theta$ to protect $l_{AE}$. The main limitation of \cite{cite19} is that large noise could be introduced as $n$ increases. Then, we use a crypto-assisted method based on secure aggregation to compute the optimal $\theta$. The cryptographic protocols for secure aggregation include DH key agreement and secret sharing. DH key agreement is used to generate one-time masks, and secret sharing is employed to remove masks in the case of node dropout. To simplify computation, we propose two crypto-assisted methods based on the DH key agreement protocol for solving $\theta$: Crypto-assisted-$\theta$-by Sum, and Crypto-assisted-$\theta$-by Deviation, which are summarized in Algorithm 2 and Algorithm 3.

\begin{algorithm}[H]
	\caption{Crypto-assisted -$\theta$-by Sum}
	\begin{algorithmic}[1]
            \REQUIRE Set of nodes $V$, adjacency vectors of all nodes $B = \{B_1, \dots, B_n\}$, privacy budget $\varepsilon$, security parameter $\lambda$, candidate parameter domain size $K$.
            \ENSURE Optimal projection parameter \( \theta \).
            \FOR{\( k \in [1...K] \)}
                \STATE \textbf{//User side}
                \FOR{\( v_i \in V \)}
                    \STATE User $i$ performs local projection on \( B_i \): \( {\overline {B_i}} \leftarrow \text{LPEA-LOW}(B_i, \varepsilon, \alpha, \theta) \);
                    \STATE User $i$ calculates the local projection error \( l_{AE,i,k} \leftarrow |d_i - \overline{d_i}| \); 
                    \STATE User $i$ calculates the mask value \( m_i \leftarrow \text{SA}(\lambda) \) using Equation (4);
                    \STATE  User $i$ computes the encrypted local projection error \( \text{Enc}(l_{AE,i,k}) \leftarrow l_{AE,i,k} + m_i \);  
                    \STATE  User $i$ sends \( \text{Enc}(l_{AE,i,k}) \) to the collector.
                \ENDFOR
            \ENDFOR
            \STATE \textbf{//Collector side}
            \STATE The collector calculates \( \theta \leftarrow \arg \mathop {\min }\limits_k \sum\nolimits_{i = 1}^n {\left( {\frac{k}{\varepsilon } + \text{Enc}({l_{AE,i,k}}} )\right)} \);
            \RETURN \( \theta \).
	\end{algorithmic}
\end{algorithm}

Algorithm 2 uses Equation (6) to search for the parameter $\theta$ in the range $[1...K]$. Once $\theta = k$, each user $i$ first uses the current $\theta$ to perform the local projection on $B_i$ and obtain the projected vector $\overline{B_i}$ and degree $\overline{d_i}$(Line 4). Subsequently, the user calculates the local projection error $l_{AE, i, k}$ and the mask value $m_i$ (Lines 5 and 6). Meanwhile, the user also computes the encrypted projection error $\text{Enc}(l_{AE, i, k})$ according to $m_i$ (Line 7). Then the user sends the encrypted value to the collector. Finally, the collector combines the $l_{LAP}$ and $l_{AE}$ to compute the optimal $\theta$. Although Algorithm 2 can select the optimal $\theta$, it faces high communication cost, i.e., $O(nK + d_{max}K)$. The first term results from $K$ rounds of secure aggregation where each user performs key agreement with $(n$-$1)$ other users. The second term reflects $K$ rounds of local projection, where each user may add edges for up to $d_{max}$ in each round. To address the limitation of Algorithm 2, we propose the Crypto-assisted-$\theta$-by Deviation algorithm, as shown in Algorithm 3.

According to the derivative of the error function, the communication cost of Algorithm 3 is $O(n\text{Log}_{2}K + \text{Log}_{2}K)$. The first term represents the communication cost of $\text{Log}_{2}K$ rounds of secure aggregation for key agreement, and the second term represents the communication cost of users sending information to the collector $\text{Log}_{2}K$ times. The main idea of Algorithm 3 involves differentiating the parameter $\theta$ via Equation (6) and finding the root where the derivative equals zero using the binary search method. The derivative process is shown in Equation (7).

\begin{algorithm}[H]
	\caption{Crypto-assisted -$\theta$-by Deviation}
	\begin{algorithmic}[1]
            \REQUIRE Node set $V$, adjacency vector set $B = \{B_1, \dots, B_n\}$, privacy budget $\varepsilon$, security parameter $\lambda$, candidate parameter domain size $K$.
            \ENSURE Optimal projection parameter \( \theta \).
            \STATE //\textbf{Collector side}
            \STATE The collector sets the search boundaries for \( \theta \): \( \theta_L = 1 \), \( \theta_R = K \);
            \WHILE{\( |\theta_L - \theta_R| > 1 \)}
                \STATE The collector computes \( \theta^* \leftarrow \left\lfloor {{{\left( {{\theta _L} + {\theta _R}} \right)} \mathord{\left/{\vphantom {{\left( {{\theta _L} + {\theta _R}} \right)} 2}} \right.
                 \kern-\nulldelimiterspace} 2}} \right\rfloor \);
                \STATE The collector sends \( \theta^* \) to all users;
                \STATE //\textbf{User side}
                \FOR{\( v_i \in V \)}
                    \IF{ \( d_i > \theta^* \)}
                        \STATE User $i$ sets \( r_i \leftarrow 1 \);
                    \ELSE
                        \STATE User $i$ sets \( r_i \leftarrow 0 \);
                    \ENDIF
                    \STATE User $i$ computes the mask value \( m_i \leftarrow \text{SA}(\lambda) \);
                    \STATE User $i$ computes the encrypted response \( \text{Enc}(r_i) \leftarrow r_i + m_i \);
                    \STATE User $i$ sends \( \text{Enc}(r_i) \) to the collector.
                \ENDFOR
                \STATE //\textbf{Collector side}
                \IF{\( \sum\nolimits_{i = 1}^n {\text{Enc}{(r_i)}}  < {n \mathord{\left/ {\vphantom {n \varepsilon }} \right. \kern-\nulldelimiterspace} \varepsilon } \)}
                    \STATE The collector updates \( \theta_R \leftarrow \theta^* - 1 \);
                \ELSE
                    \STATE The collector updates \( \theta_L \leftarrow \theta^* + 1 \);
                \ENDIF
            \ENDWHILE
            \RETURN \( \theta=\theta_L\).
	\end{algorithmic}
\end{algorithm}

\begin{align}\label{eq7}
\frac{\partial L_{total}}{\partial\theta} &= \frac{\partial\left(\sum_{i=1}^{n}\left(\frac{\theta}{\epsilon} + |\mathop{d_i}\limits^- -d_i|\right)\right)}{\partial\theta} \Rightarrow \frac{\partial}{\partial\theta}\left(\frac{n\theta}{\epsilon} + \sum_{i=1}^{n}\text{max}\left(0, d_i - \theta\right)\right) = 0\notag\\
&\Rightarrow  \frac{n}{\epsilon} - |\{i:d_i > \theta\}| = 0 \Rightarrow \frac{n}{\epsilon} = |\{i:d_i > \theta\}|
\end{align}

where $\frac{\partial}{\partial\theta}(\text{max}\left(0, d_i - \theta)\right)$ = -1 when $d_i>\theta$. Since the actual number of users whose degree exceed $\theta$ is unknown, we use -$|\{i:d_i>\theta\}|$ to replace $\frac{\partial}{\partial\theta}(\text{max}\left(0, d_i - \theta)\right)$, where $|\{i:d_i>\theta\}|$ denotes the number of users with $d_i>
\theta$.

According to Algorithm 3, we know that the optimal $\theta$ is the (1 - $\frac{1}{\epsilon}$) quantile of the degree sequence. The quantile guarantees that $\frac{n}{\epsilon}$ users whose degree exceeds $\theta$, minimizing the total error. Thus, Algorithm 3 leverages the collector's global view to compute the (1 - $\frac{1}{\epsilon}$) quantile. The collector performs a binary search in the candidate parameter range. In each iteration, the collector selects the median $\theta^{*}$ of this range as the candidate projection parameter, and shares $\theta^{*}$ with all users (Lines 2-5). Each user compares its degree $d_i$ with $\theta^{*}$. If $d_i > \theta^{*}$, the user sets $r$=1, and $r$=0 otherwise (Lines 6-11). However, the setting $r$ may contain sensitive information, which helps the collector infer the user's true degree. For instance, if the user $i'$ degree $d_i$=3, the collector will deduce the value by asking whether $d_i > 2$ and $d_i < 4$. To protect $r_i$, the user needs to rely on secure aggregation to encrypt it (Lines 12-14), and the collector aggregates the encrypted settings, denoted by $\sum_{i=1}^{n}\text{Enc}(r_i)$. If $\sum_{i=1}^{n}\text{Enc}(r_i)\ge\frac{n}{\epsilon}$, the collector updates the boundary $\theta_L\leftarrow\theta^* + 1$, and $\theta_L\leftarrow\theta^* - 1$ otherwise (Lines 17-21). The iteration stops until the optimal $\theta = \theta_L$.

\subsection{Neighbor's Degree Order Encoding: NDOE}


Algorithm 1 shows that node $v_i$ starts the edge addition process upon acquiring $\theta$. Fig. 2 reveals better degree sequence release accuracy when adding edges from smaller degree nodes first. However, this process requires $v_i$ to know its neighbors' degree order. Since neighbors' degree counts contain private information, these users refuse to share them explicitly with $v_i$. \cite{cite22} proposes an order-preserving encryption protocol suitable for multiple users, but it requires a trusted third party and incurs high communication costs. \cite{cite26} proposes an encoding method, OP$\varepsilon$c, which achieves partial order preservation. This method allows users to protect their data locally while achieving partial order preservation. However, this method cannot be directly applied to the node-LDP model because the method typically only protects a single value, while node-LDP requires protecting all edge information associated with a node. Therefore, based on node-LDP, we propose NODE method with OP$\varepsilon$c and the exponential mechanism, which is shown as Algorithm 4. 

\begin{algorithm}[H]
	\caption{NDOE}
	\begin{algorithmic}[1]
            \REQUIRE The degree \( d_i \) of node \( v_i \), privacy budget \( \varepsilon \), privacy budget allocation parameter \( \alpha \), partition \( P \), encoding range \( O \), and sensitivity \( \Delta u \).
            \ENSURE Degree order \( o_i \) of node \( v_i \).
            \FOR{ \( j \in O \)}
                \STATE User $i$ calculates the probability \( p_{i,j} \):
                   ${p_{i,j}} \leftarrow \frac{{{e^{ - |{d_i} - median(P[j])| \cdot \alpha \varepsilon /4\Delta u}}}}{{\sum\nolimits_{m = 1}^{|O|} {{e^{ - |{d_i} - median(P[m])| \cdot \alpha \varepsilon /4\Delta u}}} }}$
            \ENDFOR
            \STATE User $i$ calculates all sampling probability \( p_i \leftarrow \{ {p_{i,1}},...,{p_{i,|O|}}\} \);
            \STATE User $i$ samples \( o_i \) from \( O \) with probability \( p_i \) as their degree order;
            \RETURN \( o_i \).
	\end{algorithmic}
\end{algorithm}

NODE encodes the degree information of $v_i'$ neighboring nodes and uses the encoded results as their degree order information. In Algorithm 4, the user $i$ calculates the probability $p_{i,j}$ of perturbation across all intervals, where $p_{i,j}$ is the probability that node $v_i$ is perturbed into the $j$-th partition (Lines 1-4). Then, given $|O|$ positions, the user $i$ obtains the associated perturbation probability $\{p_{i,1},...,p_{i,|O|}\}$ for each location. Finally, the user $i$ samples its degree order information $o_i$ from the encoding range $O$ with probability $p_i$ (Line 5). As explained in Algorithm 1, the user $i$ uses the two parameters $P$ and $O$ shared from the collector to set the score function $u$ and the sensitivity $\Delta u$. NODE achieves the partial order preservation because it ensures that the user $i$ samples the position where its degree resides. Given the user $i'$ degree $d_i$ and the $j$-th partition, the score function $u$=-$|d_i - \text{median}(P[j])|$, where $\text{median}(P[j])$ is the median value of the $j$-th partition. 
The sensitivity \( \Delta u \) of the score function \( u \) is calculated as follows:
\begin{align}\label{eq8}
\Delta u &= \mathop {\max }\limits_{o \in O} \mathop {\max }\limits_{d_i \cong d_j} u(d_i,o) - u(d_j,o)\notag\\ &= \mathop {\max }\limits_{o \in O} \mathop {\max }\limits_{d_i \cong d_j} \left| {d_j - \text{median}(P[o])} \right| - \left| {d_i - \text{median}(P[o])} \right|
\end{align}

As can be seen from equation (8), when $d_i$=$d_{max}$ and  $d_j$=$d_{min}$, respectively, $\Delta u$=\( d_{\text{max}} - d_{\text{min}} \). Thus, the user $i$ randomly samples a value \( o_i \) from the range \( O \) with probability \( p_i \) as the degree order and sends \( o_i \) to all its neighbors.

\begin{theorem}\label{th1}
The NDOE algorithm satisfies $\frac{\alpha\varepsilon}{2}$-node-LDP.
\end{theorem}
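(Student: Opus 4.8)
The plan is to recognize the NDOE sampling rule as an instance of the exponential mechanism and then invoke its standard privacy guarantee with the sensitivity $\Delta u = d_{max}-d_{min}$ established just before the theorem. First I would rewrite the probability $p_{i,j}$ of Algorithm~4 in the canonical exponential-mechanism form. Since the score function is $u(d_i,o) = -|d_i - \text{median}(P[o])|$, the exponent $-|d_i - \text{median}(P[j])|\cdot \alpha\varepsilon/4\Delta u$ equals $\frac{\alpha\varepsilon}{4\Delta u}\,u(d_i,j) = \frac{(\alpha\varepsilon/2)}{2\Delta u}\,u(d_i,j)$. Hence NDOE outputs $o$ with probability proportional to $\exp\!\big(\frac{\varepsilon' u(d_i,o)}{2\Delta u}\big)$ with effective budget $\varepsilon' = \alpha\varepsilon/2$, which is exactly the exponential mechanism instantiated at privacy level $\alpha\varepsilon/2$. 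Matching this coefficient is what produces the factor of $\tfrac{1}{2}$ in the theorem statement.

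The next step is to pin down the sensitivity under the node-LDP adjacency relation. Because two adjacent vectors $B_i,B_j$ may differ in up to $n$ bits, the only quantity feeding into $u$ that changes is the degree, which ranges over $[d_{min},d_{max}]$. By the triangle inequality, for every fixed output $o$ we have $\big||d_i - \text{median}(P[o])| - |d_j - \text{median}(P[o])|\big| \le |d_i - d_j| \le d_{max}-d_{min}$, so $\Delta u = d_{max}-d_{min}$, as already derived in the sensitivity computation preceding the theorem. I would state this bound explicitly so that substituting $2\Delta u$ in the denominator is fully justified.

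With these two ingredients, the core of the argument is the standard two-factor ratio bound. For neighboring degrees $d,d'$ and any fixed output $o$, I would write the likelihood ratio as the product of (i) the unnormalized weight ratio $\exp\!\big(\frac{\varepsilon'(u(d,o)-u(d',o))}{2\Delta u}\big) \le \exp(\varepsilon'/2)$, and (ii) the ratio of normalizing constants $\frac{\sum_{o'}\exp(\varepsilon' u(d',o')/2\Delta u)}{\sum_{o'}\exp(\varepsilon' u(d,o')/2\Delta u)} \le \exp(\varepsilon'/2)$, where the second bound follows term by term from $|u(d,o')-u(d',o')|\le\Delta u$. Multiplying the two factors yields $\Pr[\mathcal{M}(B_i)=o]/\Pr[\mathcal{M}(B_j)=o] \le \exp(\varepsilon') = \exp(\alpha\varepsilon/2)$, which is precisely Definition~\ref{def1} with budget $\alpha\varepsilon/2$.

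The main obstacle I anticipate is not the ratio algebra but justifying that the node-LDP adjacency relation collapses to a bounded change in a single scalar. I would argue that, although $B_i$ and $B_j$ differ in up to $n$ bits, the NDOE output depends on $B_i$ only through the scalar degree $d_i = \sum_j b_{ij}$, so the release is a deterministic post-processing of $d_i$ and the effective sensitivity is governed solely by the worst-case degree gap $d_{max}-d_{min}$ rather than by the $n$-bit Hamming distance. Establishing this reduction is exactly what lets the scalar sensitivity replace the naive node sensitivity of $n-1$, and it is the step where the node-LDP guarantee, rather than mere $\varepsilon$-DP over a scalar query, must be argued with care.
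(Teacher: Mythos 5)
Your proposal is correct and follows essentially the same route as the paper: both identify NDOE as the exponential mechanism with effective budget $\alpha\varepsilon/2$ (from matching the coefficient $\alpha\varepsilon/4\Delta u = \tfrac{(\alpha\varepsilon/2)}{2\Delta u}$), use $\Delta u = d_{max}-d_{min}$, and bound the likelihood ratio by the standard two-factor product of the unnormalized-weight ratio and the normalizing-constant ratio, each at most $\exp(\alpha\varepsilon/4)$. Your explicit justification that the node-LDP adjacency relation collapses to a bounded change in the scalar degree is a point the paper leaves implicit, but it does not change the argument.
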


\begin{proof}
Let \( d_i \) and \( d_j \) be the degree of $v_i$ and $v_j$ in $G$, respectively. According to the exponential mechanism, for $d_i$ and $d_j$, the NDOE algorithm outputs the degree order encoding $o$ satisfying the following equality:

\begin{equation}  \label{eq11}
\Pr \left[ {NDOE\left( {d_i,\frac{{\alpha \varepsilon }}{2}} \right) = o} \right] = \left( {\frac{{\exp (\frac{{u(d_i,o) \cdot \frac{{\alpha \varepsilon }}{2}}}{{2\Delta u}})}}{{\sum\nolimits_{o' \in O} {\exp (\frac{{u(d_i,o') \cdot \frac{{\alpha \varepsilon }}{2}}}{{2\Delta u}})} }}} \right)
\end{equation}
\begin{equation}  \label{eq12}
\Pr \left[ {NDOE\left( {d_j,\frac{{\alpha \varepsilon }}{2}} \right) = o} \right] = \left( {\frac{{\exp (\frac{{u(d_j,o) \cdot \frac{{\alpha \varepsilon }}{2}}}{{2\Delta u}})}}{{\sum\nolimits_{o' \in O} {\exp (\frac{{u(d_j,o') \cdot \frac{{\alpha \varepsilon }}{2}}}{{2\Delta u}})} }}} \right)
\end{equation}

According to Definition 2.1, 

\begin{align}\label{eq11}
\frac{\Pr\left[NODE\left(d_i, \frac{\alpha\epsilon}{2}\right) = o\right]}{\Pr\left[NODE\left(d_j, \frac{\alpha\epsilon}{2}\right) = o\right]} \notag &\le \exp\left(\frac{(u(d_i, o) - u(d_j, o))\cdot\frac{\alpha\epsilon}{2}}{2\Delta u}\right)\times\frac{\exp\left(|d_j - d_i|\right)\cdot\frac{\alpha\epsilon}{2}}{2\Delta u}\notag\\&\le \exp\left(\frac{\alpha\epsilon}{2}\right)
\end{align}
\end{proof}

Therefore, the NODE algorithm satisfies $\frac{\alpha\epsilon}{2}$ -node-LDP.

\subsection{Local Projection Edge Addition-Low: LPEA-LOW}

After obtaining the optimal projection parameter $\theta$ and the degree order of neighboring nodes, each user initiates edge-connection negotiations with its neighbors having smaller degrees. However, this mutual negotiation process inherently involves exchanging private information between parties. Therefore, we propose the LPEA-LOW with the WRR mechanism to protect the process, and the details are shown in Algorithm 5. In Algorithm 5, the user $i$ first sets an all-zero vector $\overline{B_i}$ to record the added wages (Line 2). Before the edge addition process, the user sets the position value in $\overline{B_i}$ to 1 for all established edges (Line 3). Subsequently, the user denotes the neighbors without established edges as the set $U_1$ and sends edge addition requests to them (Lines 4-5). For the user in $U_1$, if $|\overline{B_i}|<\theta$, it responds 'Yes' with $p=\frac{e^\epsilon}{e^\epsilon + 1}$ and 'No' with $q=\frac{1}{e^\epsilon + 1}$, otherwise. $p$ and $q$ are reversed (Lines 7-12). In Lines 7-12, we use the WRR mechanism to protect the privacy of the user $i'$ neighbors. If a neighboring node $v_j \in U_1$ rejects the edge addition request, it reveals that $v_j$' degree exceeds the parameter $\theta$, which is sensitive information. The user \( i \) adjusts the number of the 'Yes' responses from all neighbors, and obtains the adjusted number $c'$. The user compares $c'$ with $\theta-|\overline{B_i}|$ and obtains the minimum value $s$, that is, $s\leftarrow \text{min}(\theta-|\overline{B_i}|)$, where  $\theta-|\overline{B_i}|$ denotes the remaining edge capacity of the node $v_i$. Finally, the user selects $s$ neighbors with smaller degrees to form the set $U_3$ and establishes edges with them (Lines 15-21).

\begin{algorithm}[H]
	\caption{LPEA-LOW}
	\begin{algorithmic}[1]
            \REQUIRE Adjacency vector \( B_i \) of node \( v_i \), privacy budget \( \varepsilon \), privacy budget allocation parameter \( \alpha \), and projection parameter \( \theta \).
            \ENSURE Projected adjacency vector \( \overline{B_i} \).
            \STATE //\textbf{User side}
            \STATE User $i$ initializes an all-zero neighbor vector \(\overline{B_i} \);
            \STATE User $i$ sets the position value in \(\overline{B_i} \) to 1 for all established edges;
            \STATE User $i$ defines the set \( U_1 \) as the neighbors without established edges;
            \STATE User $i$ sends edge-addition requests to all neighbors in \( U_1 \);
            \STATE //\textbf{Neighbor side}
            \FOR{\( v_j \in U_1 \)}
                \IF{\( |{\overline B _j}| < \theta \)}
                    \STATE User $j$ responds "Yes" with probability \( p \) and "No" with probability \( q \);
                \ELSE
                    \STATE User $j$ responds "No" with probability \( p \) and "Yes" with probability \( q \);
                \ENDIF
            \ENDFOR
            \STATE //\textbf{User side}
            \STATE User $i$ collects the neighbors from \( U_1 \) who responded "Yes" and forms a set \( U_2 \);
            \STATE User $i$ calculates \( c' \leftarrow \frac{{|{U_2}| \cdot ({e^{\alpha \varepsilon /2}} + 1) - |{U_1}|}}{{{e^{\alpha \varepsilon /2}} - 1}} \);
            \STATE User $i$ calculates \( s \leftarrow \min(c', (\theta - |{\overline B _i}|)) \);
            \STATE User $i$ selects \( s \) neighbors with smaller degrees from \( U_2 \) to form set \( U_3 \);
            \FOR{\( v_j \in U_3 \)}
                \STATE User $i$ sets \( \overline{B_i}[j] \leftarrow 1 \);
                \STATE User $j$ sets \( \overline{B_j}[i] \leftarrow 1 \);
            \ENDFOR
            \RETURN \( {\overline B _i} \).
	\end{algorithmic}
\end{algorithm}

\begin{theorem}\label{th2}
Let \( |U_1| \) be the number of neighbors who responded to the edge addition request, \( c \) be the actual number of neighbors who can establish edges, and \( c' \) be the number of neighbors who can establish edges after processing by the LPEA-LOW algorithm. For any node \( v_i \), \( \text{E}(c') = c \) holds.
\end{theorem}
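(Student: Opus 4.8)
The plan is to show that the correction formula defining $c'$ in Algorithm 5 is an unbiased estimator of $c$, following the standard debiasing argument for Warner randomized response. First I would fix the relevant quantities for node $v_i$: write $N = |U_1|$ for the total number of neighbors that received an edge-addition request, and partition $U_1$ into the $c$ neighbors that still have remaining capacity (those with $|\overline{B_j}| < \theta$, which are precisely the neighbors that ``can establish edges'') and the $N-c$ neighbors that are already saturated. By the WRR branching (Lines 7--12), a neighbor with remaining capacity answers ``Yes'' with probability $p = \frac{e^{\alpha\varepsilon/2}}{e^{\alpha\varepsilon/2}+1}$, whereas a saturated neighbor has $p$ and $q$ swapped and hence answers ``Yes'' with probability $q = \frac{1}{e^{\alpha\varepsilon/2}+1}$.

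Next I would express the observed ``Yes'' count $|U_2|$ as a sum of Bernoulli indicators, one per neighbor in $U_1$, and apply linearity of expectation. Writing $t = e^{\alpha\varepsilon/2}$ for brevity, this gives
\begin{equation*}
\mathbb{E}\left(|U_2|\right) = c\cdot\frac{t}{t+1} + (N-c)\cdot\frac{1}{t+1} = \frac{c(t-1) + N}{t+1}.
\end{equation*}
Independence of the responses is not actually needed here---linearity alone suffices---but it clearly holds, since each neighbor applies WRR on its own.

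Finally I would substitute this into the estimator $c' = \frac{|U_2|\cdot(t+1) - N}{t-1}$ used in Algorithm 5. Because $c'$ is an affine function of $|U_2|$, its expectation is obtained by replacing $|U_2|$ with $\mathbb{E}(|U_2|)$; the factor $(t+1)$ cancels the denominator of $\mathbb{E}(|U_2|)$, the additive $N$ cancels, and the common $(t-1)$ factors cancel, leaving $\mathbb{E}(c') = c$ exactly.

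I do not expect a genuine obstacle, as the computation is the routine randomized-response inversion. The only point requiring care is bookkeeping: correctly assigning the ``Yes'' probability $p$ to the capacity-available neighbors and $q$ to the saturated ones (the two branches are swapped in the algorithm), and keeping the privacy budget of this WRR step fixed at $\alpha\varepsilon/2$ so that $t = e^{\alpha\varepsilon/2}$ matches the constant in the formula for $c'$. A secondary subtlety worth one sentence is that $c'$ as written is real-valued and may be non-integral, or even lie outside $[0,N]$, for a particular sample; since the claim is only unbiasedness in expectation, no rounding or clamping enters the argument.
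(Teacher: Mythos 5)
Your proof is correct and takes essentially the same route as the paper: express $\mathbb{E}(|U_2|)$ as a mixture over the $c$ capacity-available neighbors (answering ``Yes'' with probability $p$) and the $|U_1|-c$ saturated ones (probability $q$), then invert the affine estimator by linearity of expectation. You are in fact slightly more careful than the paper's own write-up, which states the first identity as an exact equality for $|U_2|$ rather than for its expectation and writes the constant as $e^{\varepsilon}$ while Algorithm 5 uses $e^{\alpha\varepsilon/2}$; your bookkeeping with $t=e^{\alpha\varepsilon/2}$ is the one consistent with the algorithm.
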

\begin{proof}
Let \( |U_2| \) be the number of neighbors of \( v_i \) who responded "Yes" to the edge addition request, then the following equation holds:
\begin{equation}  \label{eq13}
|{U_2}| = c \cdot \frac{{{e^\varepsilon }}}{{{e^\varepsilon } + 1}} + \left( {|{U_1}| - c} \right) \cdot \frac{1}{{{e^\varepsilon } + 1}}
\end{equation}

Since the node \( v_i \) cannot know the true value of \( c \), the corresponding estimated value \( c' \) is calculated as:
\begin{equation}  \label{eq14}
c' = \frac{{|{U_2}| \cdot ({e^\varepsilon } + 1) - |{U_1}|}}{{{e^\varepsilon } - 1}}
\end{equation}

Thus, the expected value \( \text{E}(c') \) can be expressed as:
\begin{equation}  \label{eq15}
{\rm{E}}(c') = \frac{{{\rm{E}}(|{U_2}|) \cdot ({e^\varepsilon } + 1) - |{U_1}|}}{{{e^\varepsilon } - 1}}
\end{equation}

From this equation, we conclude that \( \text{E}(c') = c \) holds.
\end{proof}

\subsection{Degree Sequence Release: DSR}
In Algorithm 1, when all users project their degrees with the parameter $\theta$, they inject the Laplace noise with the sensitivity $\theta$ into the projected degrees. They then send the noisy degree counts to the collector. Based on this idea, we propose the DSR algorithm, which is shown in Algorithm 6.


\begin{algorithm}[H]
	\caption{DSR}
	\begin{algorithmic}[1]
            \REQUIRE Set of nodes \( V \), projected adjacency vectors of all users \( \overline B  = \{ {\overline B _1},...,{\overline B _n}\} \), privacy budget \( \varepsilon \), privacy budget allocation parameter \( \alpha \), and projection parameter \( \theta \).
            \ENSURE Degree sequence \( \text{Ds}(G) \), degree distribution \( \text{Dist}(G) \).
            \STATE //\textbf{User side}
            \FOR{\( v_i \in V \)}
                \STATE User $i$ calculates the locally projected degree \( \overline{d_i} \);
                \STATE User $i$ adds Laplace noise \( \widetilde{d_i} \leftarrow \overline{d_i} + \text{Lap}(\theta/\varepsilon) \);
                \STATE User $i$ sends \( \widetilde{d_i} \) to the collector.
            \ENDFOR
            \STATE //\textbf{Collector side}
            \STATE The collector aggregates the noisy degree counts from all node users;
            \STATE The collector calculates the degree sequence \( \text{Ds}(G) \leftarrow \{ {\widetilde d_1},...,{\widetilde d_n}\} \) and the degree distribution \( \text{Dist}(G) \leftarrow \{ {p'_0},...,{p'_{n - 1}}\} \);// \( p'_i \) represents the proportion of users with noisy degree \( i \);
            \RETURN \( \text{Ds}(G), \text{Dist}(G) \).
	\end{algorithmic}
\end{algorithm}

In DSR algorithm, each user first calculates its degree $\overline{d_i}$ based on the locally projected adjacency vector $\overline{B_i}$ (Line 3). The user then employs the Laplace mechanism with sensitivity $\theta$ to disturb $\overline{d_i}$ and obtains the noise degree $\widetilde{d_i}$. Meanwhile, the user sends $\widetilde{d_i}$ to the collector (Lines 4-5). The collector aggregates the noisy degree counts and calculates the noisy degree sequence and distribution (Lines 8-9). 


\begin{theorem}\label{th3}
DSR algorithm  satisfies \( (1-\alpha)\varepsilon \)-node-LDP.
\end{theorem}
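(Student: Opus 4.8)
The plan is to recognize that, once \textbf{NDOE} and \textbf{LPEA-LOW} have fixed the projected adjacency vector $\overline{B_i}$, the only remaining data-dependent release in Algorithm~6 is the noisy scalar $\widetilde{d_i}$, produced by a single application of the Laplace mechanism to $\overline{d_i}$. Hence the proof reduces to two pieces: (i) pinning down the sensitivity of the map $\overline{B_i}\mapsto\overline{d_i}$ under the node-LDP neighboring relation of Definition~\ref{def1}, and (ii) invoking the standard Laplace-mechanism guarantee with the noise scale allocated to this phase, namely $(1-\alpha)\varepsilon$.

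First I would establish the sensitivity. By construction, \textbf{LPEA-LOW} caps every node's projected degree at $\theta$, so $\overline{d_i}=|\overline{B_i}|\in[0,\theta]$ for every admissible input. Under Definition~\ref{def1} two adjacency vectors $B_i\simeq B_j$ may differ in as many as $n$ bits, yet after projection both images are clipped into $[0,\theta]$. Consequently, for the scalar query $f(\overline{B_i})=\overline{d_i}$,
\begin{equation}
\Delta f=\max_{B_i\simeq B_j}\bigl|\overline{d_i}-\overline{d_j}\bigr|\le\theta,\notag
\end{equation}
and the bound is tight (take one projected degree $0$ and another $\theta$). The key point is that the $n$-bit neighboring relation does \emph{not} inflate the sensitivity beyond $\theta$, precisely because projection clips the output before noise is injected; this is what replaces the prohibitive raw-degree sensitivity of $n-1$.

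Second I would apply the Laplace mechanism. Releasing $\widetilde{d_i}=\overline{d_i}+\mathrm{Lap}\!\bigl(\theta/((1-\alpha)\varepsilon)\bigr)$ with $\Delta f\le\theta$ satisfies $(1-\alpha)\varepsilon$-node-LDP by the usual ratio argument: for any output $o$ and any neighbors $B_i\simeq B_j$,
\begin{equation}
\frac{\Pr[\widetilde{d_i}=o\mid B_i]}{\Pr[\widetilde{d_j}=o\mid B_j]}
=\exp\!\Bigl(\tfrac{(1-\alpha)\varepsilon}{\theta}\bigl(|o-\overline{d_j}|-|o-\overline{d_i}|\bigr)\Bigr)
\le\exp\!\bigl((1-\alpha)\varepsilon\bigr),\notag
\end{equation}
where the last inequality uses the reverse triangle inequality together with $|\overline{d_i}-\overline{d_j}|\le\theta$. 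Finally, the degree sequence $\mathrm{Ds}(G)$ is merely the collection of these noisy scalars and $\mathrm{Dist}(G)$ is a deterministic function of them, so by the post-processing property neither consumes additional budget; thus \textbf{DSR} satisfies $(1-\alpha)\varepsilon$-node-LDP.

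The step I expect to be the main obstacle is the sensitivity calibration. One must argue explicitly that the quantity perturbed is the clipped degree $\overline{d_i}\in[0,\theta]$ rather than the raw degree, and therefore that the Laplace scale in Line~4 of Algorithm~6 must be $\theta/((1-\alpha)\varepsilon)$ for the claimed budget to hold, consistent with the composition accounting that assigns $\tfrac{\alpha\varepsilon}{2}$ to \textbf{NDOE} (Theorem~\ref{th1}) and $\tfrac{\alpha\varepsilon}{2}$ to the \textbf{WRR} step of \textbf{LPEA-LOW}, leaving exactly $(1-\alpha)\varepsilon$ for this phase. Once this calibration is fixed, the remainder is a direct instantiation of the Laplace mechanism.
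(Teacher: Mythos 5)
Your proof is correct and follows essentially the same route as the paper: bound the sensitivity of the projected degree by $\theta$ (since \textbf{LPEA-LOW} clips every $\overline{d_i}$ into $[0,\theta]$) and then run the standard Laplace-mechanism ratio argument with scale $\theta/((1-\alpha)\varepsilon)$, which is exactly the chain of inequalities in the paper's Equation~(15). Your explicit sensitivity calibration and your observation that Line~4 of Algorithm~6 should read $\mathrm{Lap}(\theta/((1-\alpha)\varepsilon))$ rather than $\mathrm{Lap}(\theta/\varepsilon)$ are welcome additions that the paper leaves implicit (and, in the algorithm listing, slightly inconsistent with its own proof).
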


\begin{proof}
Let \( \overline{d_i} \) and \( \overline{d_j} \) be the two projected degrees of any two nodes $v_i$ and $v_j$. The noisy output $\widetilde{d}$ from DSR algorithm satisfies the following equality:

\begin{align}\label{eq15}
\frac{{\Pr [DSR(\overline d_i ,(1 - \alpha )\varepsilon ) = \widetilde d]}}{{\Pr [DSR(\overline d_j, (1 - \alpha )\varepsilon ) = \widetilde d]}} \notag &= \frac{{\exp ( - |\widetilde d - \overline d_i | \cdot \frac{{(1 - \alpha )\varepsilon }}{\theta })}}{{\exp ( - |\widetilde d - \overline d_j| \cdot \frac{{(1 - \alpha )\varepsilon }}{\theta })}} \notag \\&= \exp \left( {(|\widetilde d - \overline d_j| - |\widetilde d - \overline d_i|) \cdot \frac{(1-\alpha)\epsilon}{\theta}} \right) \notag \\ & \le \exp \left( {|\overline d_i- \overline d_j| \cdot \frac{(1-\alpha)\epsilon}{\theta} } \right) \le \exp ((1 - \alpha )\varepsilon )
\end{align}
\end{proof}

According to Definition 2.1, DSR algorithm satisfies $(1-\alpha)\epsilon$-node-LDP.

\begin{theorem}\label{th4}
The CADR-LDP algorithm satisfies \( \varepsilon \)-node-LDP.
\end{theorem}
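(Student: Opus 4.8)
The plan is to prove Theorem~\ref{th4} by sequential composition over the privacy-consuming stages of CADR-LDP, after establishing that the optimal-$\theta$-selection stage is protected cryptographically rather than through differential privacy and hence consumes no privacy budget. First I would observe from Algorithm~1 that the procedure touches user data in exactly four places: Optimal-$\theta$-Selection (Line~1), NDOE (Line~8), LPEA-LOW (Line~14), and DSR (Line~16). The only quantities the untrusted collector ever observes are the encrypted aggregates produced during $\theta$-selection and the final noisy degrees released by DSR; the NDOE encodings and the WRR edge-addition responses in LPEA-LOW are exchanged only among neighboring users. I would therefore analyze each stage in isolation and combine the guarantees at the end.

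For the $\theta$-selection stage I would invoke the security of the secure-aggregation protocol described in Section~2.3. Every message a user transmits --- whether $\mathrm{Enc}(l_{AE,i,k})$ in Algorithm~2 or $\mathrm{Enc}(r_i)$ in Algorithm~3 --- is masked by a one-time Diffie--Hellman--derived pad, so the collector's view of any individual message is indistinguishable from a uniform value, and the masks cancel exactly so that only the population aggregate $\sum_i \mathrm{Enc}(r_i)$ is recovered. Because this aggregate is a symmetric function of the entire node set and the selected $\theta$ is merely the global $(1-\tfrac{1}{\varepsilon})$ quantile derived from it, no single adjacency vector $B_i$ can be distinguished from a neighboring $B_j$ on the basis of the collector's transcript; hence this stage contributes $0$ to the node-LDP budget. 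This is the step I expect to be the main obstacle, since it requires reconciling the computational guarantee of secure aggregation with the information-theoretic ratio bound in Definition~\ref{def1}; I would make precise that $\theta$ is a function of the cryptographically protected aggregate alone and carries no additional per-node leakage.

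Next I would assemble the per-stage differential-privacy guarantees. NDOE satisfies $\tfrac{\alpha\varepsilon}{2}$-node-LDP by Theorem~\ref{th1}, and DSR satisfies $(1-\alpha)\varepsilon$-node-LDP by Theorem~\ref{th3}. For LPEA-LOW I would show it satisfies $\tfrac{\alpha\varepsilon}{2}$-node-LDP: each neighbor answers the edge-addition request via the WRR mechanism with flipping probability $p=\frac{e^{\alpha\varepsilon/2}}{e^{\alpha\varepsilon/2}+1}$, consistent with the unbiased estimator in Line~16 of Algorithm~5, so a single WRR response on the Boolean predicate $|\overline{B}_j|<\theta$ yields a likelihood ratio bounded by $p/(1-p)=e^{\alpha\varepsilon/2}$. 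Since a node's degree information is revealed to a requester only through this randomized response, the node-LDP ratio for LPEA-LOW is bounded by $e^{\alpha\varepsilon/2}$.

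Finally, I would apply the composition property of Section~2.2: releasing NDOE, LPEA-LOW, and DSR in sequence satisfies $\bigl(\tfrac{\alpha\varepsilon}{2}+\tfrac{\alpha\varepsilon}{2}+(1-\alpha)\varepsilon\bigr)$-node-LDP, which equals $\varepsilon$-node-LDP, and prepending the zero-budget $\theta$-selection stage leaves the total unchanged at $\varepsilon$. This closes the argument and establishes that CADR-LDP satisfies $\varepsilon$-node-LDP.
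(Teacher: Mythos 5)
Your proof takes essentially the same route as the paper's: decompose CADR-LDP into NDOE ($\tfrac{\alpha\varepsilon}{2}$ via the exponential mechanism, Theorem~\ref{th1}), LPEA-LOW ($\tfrac{\alpha\varepsilon}{2}$ via WRR), and DSR ($(1-\alpha)\varepsilon$ via Laplace, Theorem~\ref{th3}), then apply sequential composition to get $\varepsilon$, treating the remaining steps as non-consuming. Your treatment is somewhat more explicit than the paper's one-line dismissal of the other stages --- in particular you spell out why the secure-aggregation-based $\theta$-selection contributes no budget and derive the WRR likelihood ratio for LPEA-LOW --- but the decomposition, budget allocation, and composition argument are identical.
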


\begin{proof}
In CADR-LDP algorithm, NODE algorithm uses the exponential mechanism with privacy budget $\frac{\alpha\epsilon}{2}$. LPEA-LOW algorithm uses WRR mechanism with privacy budget $\frac{\alpha\epsilon}{2}$. And DSR algorithm uses the Laplace mechanism with privacy budget $(1-\alpha)\epsilon$. All other steps are independent and performed on the noisy output. By composition property\cite{cite23}, the CADR-LDP algorithm satisfies \( \varepsilon \)-node-LDP.
\end{proof}

\section{Experiment}
In this section, we report experimental results comparing our
proposed solutions with the state of the art, and analyzing how different aspects of our proposed solutions affect the utility.

\subsection{Datasets and Settings}
Our experiments run in Python on a client with Intel core i5-7300HQ CPU, 16GB RAM running Windows 10. We use 8 real-world graph datasets from SNAP (https://snap.stanford.edu/data/), as shown in Table 1. The datasets are from different domains, including citation, email, and social networks. We preprocessed all graph data to be undirected and symmetric. Table 1 also shows some additional information such as $|V|$, $|E|$, $d_{max}$, and $d_{avg}$, where $|V|$ denotes the number of nodes, $|E|$ is the number of edges, $d_{max}$ denotes maximum degree, and $d_{avg}$ denotes average degree. We compare our methods, $\text{\tt{LPEA-LOW}}$ and $\text{\tt{LPEA-HIGH}}$, against $\text{\tt{RANDOM-ADD}}$ and $\text{\tt{EDGE-REMOVE}}$ for publishing node degree sequence while satisfying node-LDP. $\text{\tt{LPEA-HIGH}}$ is a variant of $\text{\tt{LPEA-LOW}}$, which prioritizes nodes with higher degrees when adding edges. $\text{\tt{RANDOM-ADD}}$ algorithm uses the edge-addition process to add edges among nodes randomly. $\text{\tt{EDGE-REMOVE}}$ algorithm uses edge-deletion process to delete edges among nodes. The four projection methods are compared in both non-private and private scenarios, and all results are average values over 20 runs. 

\begin{table}[H]
    \centering
    \caption{Characteristics of Datasets}
    \begin{tabular}{ccccc}
         \hline 
         \textbf{Graph}& \textbf{$|V|$}& \textbf{$|E|$} & \textbf{$d_{max}$}& \textbf{$d_{avg}$} \\
         \hline
        Facebook&4039&88234&1045&43.69 \\ \hline
        Wiki-Vote&7115&100762&1065&	28.32 \\ \hline
        CA-HepPh&12008&118505&491	&19.74 \\ \hline
        Cit-HepPh&34546&420899&846	&24.37 \\ \hline
        Email-Enron&36692&183831&1383	&10.02 \\ \hline
        Loc-Brightkite&58228&214078&1134&	7.35 \\ \hline
        Twitter&81306&1342303&3383	&33.02 \\ \hline
        Com-dblp&317080&1049866&343	&6.62 \\ \hline
    \end{tabular}    
    \label{tab:1}
\end{table}

\subsection{Performance Comparison in Non-Private Setting}
Based on the above 8 datasets, we first directly compare our proposed projection method, $\text{\tt{LPEA-LOW}}$ (abbreviated as $\textbf{LL}$), with three other projection methods mentioned in Section 4.1:  $\textbf{ER}$ ($\text{\tt{EDGE-REMOVE}}$), $\textbf{RA}$ ($\text{\tt{RANDOM-ADD}}$), and $\textbf{LH}$ ($\text{\tt{LPEA-HIGH}}$). We use two metrics: the MAE error and $\frac{|E'|}{|E|}$, where $E'$ denotes the edges after projection. Each dataset in Table 2 contains 3 rows: the first row shows the MAE of the releasing degree sequence, the second row shows the MAE of the degree distribution, and the third row shows the ratio of $\frac{|E'|}{|E|}$. We consider three $\theta$ values, $\theta$ = 16, 64, 128.

\begin{table}[H]
\centering
\caption{Comparison of Four Projection Methods}
\begin{tabular}{ccccccccccccc}
\hline
{\textbf{Graph}} & \multicolumn{4}{c}{\textbf{$\theta=16$}} & \multicolumn{4}{c}{\textbf{$\theta=64$}} & \multicolumn{4}{c}{\textbf{$\theta=128$}} \\  \cline{2-13} & \textbf{ER} & \textbf{RA} & \textbf{LH} & \textbf{LL} & \textbf{ER} & \textbf{RA} & \textbf{LH} & \textbf{LL} & \textbf{ER} & \textbf{RA} & \textbf{LH} & \textbf{LL}  \\
\hline
{\textbf{Facebook}}  &34.56&31.98&32.58&\textbf{31.02}&16.17&14.90&15.92&\textbf{13.38}&5.80&5.79&6.34&\textbf{4.71} \\&
1.27&1.27&1.28&\textbf{1.27}&\textbf{0.49}&0.50&0.50&0.52&\textbf{0.26}&0.27&0.28&0.27\\
&0.21&0.27&0.25&\textbf{0.29}&0.63&0.66&0.64&\textbf{0.69}&0.87&0.87&0.85&\textbf{0.89} \\
\hline
{\textbf{Wiki-Vote}}  &24.31&23.04&23.40&\textbf{21.46}&4.73&14.05&14.94&\textbf{12.23}&8.01&7.82&8.58&\textbf{6.82} \\
&0.95&1.12&1.18&\textbf{0.68}&0.40&0.56&0.59&\textbf{0.31}&0.22&0.31&0.34&\textbf{0.17} \\
&0.14&0.19&0.17&\textbf{0.24}&0.48&0.50&0.47&\textbf{0.57}&0.72&0.72&0.70&\textbf{0.76} \\
\hline
{\textbf{CA-HepPh}}  &13.87&13.32&13.45	&\textbf{12.67}	&7.71&	7.44&	7.61&	\textbf{6.58}&	4.75&	4.54&	4.80&	\textbf{3.80} \\
&0.49&0.53&0.58	&\textbf{0.47}	&0.18	&0.19	&0.22&	\textbf{0.17}&	0.12&	0.13&	0.14&	\textbf{0.09} \\
&0.30&0.33&0.33	&\textbf{0.35}	&0.61	&0.62&	0.61&	\textbf{0.67}&	0.76&	0.77&	0.76&	\textbf{0.81} \\
\hline
{\textbf{Cit-HepPh}}  &15.70&15.45&15.78&\textbf{13.56}	&5.10&	5.35&	5.68&	\textbf{4.41}&	1.72&	1.77&	1.79&	\textbf{1.62} \\
&0.94&0.98&1.02&\textbf{0.94}	&0.19	&0.25&	0.29&	\textbf{0.18}&	0.08&	0.08&	0.10&	\textbf{0.06} \\
&0.36&0.36&0.35&\textbf{0.44}	&0.79	&0.78&	0.77&	\textbf{0.82}&	0.93&	0.93&	0.93&	\textbf{0.93} \\
\hline
{\textbf{Email-Enron}}  
&6.71&6.71&6.77&\textbf{6.20}	&4.07	&4.25&	4.35&	\textbf{3.72}&	2.60&	2.76&	2.85&	\textbf{2.42} \\
&0.62&0.75&0.77&\textbf{0.55}	&0.41	&0.52&	0.54&	\textbf{0.28}&	0.30&	0.42&	0.45&	\textbf{0.18} \\
&0.32&0.33&0.32&\textbf{0.38}	&0.59	&0.57&	0.56&	\textbf{0.63}&	0.74&	0.72&	0.71&	0.76\\
\hline
{\textbf{Loc-Brightkite}}  &3.71&3.74&3.81&\textbf{3.30}	&1.51&	1.58	&1.65&	\textbf{1.34}&	0.71&	0.75&	0.76&	0\textbf{.65} \\
&0.27&0.42&0.44&\textbf{0.22}&	0.08	&0.13&	0.14&	\textbf{0.05}&	0.04&	0.06&	0.07&	\textbf{0.02} \\
&0.49&0.49&0.48&\textbf{0.55}	&0.79&	0.78	&0.78&	\textbf{0.82}&	0.90&	0.90&	0.89&	\textbf{0.91} \\
\hline
{\textbf{Twitter}}  &26.14&25.07&25.73&\textbf{23.66}&14.78	&14.53	&15.59&	\textbf{12.97}&	8.62&	8.66&	9.39&	\textbf{7.70} \\
&0.93&0.96&1.00&\textbf{0.93}	&0.31	&0.38&	0.46&	\textbf{0.28}&	0.16&	0.19&	0.25&	\textbf{0.13} \\
&0.21&0.24&0.22&\textbf{0.28}&	0.55	&0.56&	0.53&	\textbf{0.61}&	0.73&	0.73&	0.71&	\textbf{0.77} \\
\hline
{\textbf{Com-dblp}}  
&1.92&1.98&2.07&\textbf{1.67}&	0.26	&0.26&	0.27&	\textbf{0.22}&	0.034&	0.036&	0.036&	\textbf{0.030} \\
&0.23&0.31&0.37&\textbf{0.18}&	0.02	&0.03&	0.04&	\textbf{0.01}&	0.005&	0.005&	0.006	&\textbf{0.004} \\
&0.71&0.70&0.69&\textbf{0.75}&	0.96&	0.96	&0.96&	\textbf{0.97}&	0.99&	0.99	&0.99&	\textbf{0.99} \\
\hline      
\end{tabular}
\end{table}

The results demonstrate that the three edge-addition methods ($\textbf{LH}$, $\textbf{LL}$, and $\textbf{RA}$) have lower MAE error and higher $\frac{|E'|}{|E|}$ for releasing degree sequence, meaning that these three methods maintain enough edges after projection. However, $\text{\tt{LPEA-HIGH}}$ and $\text{\tt{RANDOM-ADD}}$ have higher degree distribution publication error than $\text{\tt{EDGE-REMOVE}}$. This is because their edge-addition processes prioritize nodes with higher degrees, neglecting to consider those with smaller degrees. On the other hand, $\text{\tt{LPEA-LOW}}$ preserves the most number of edges and achieves the lowest error in degree sequence releasing and degree distribution under different projection parameter $\theta$. The main reason is that $\text{\tt{LPEA-LOW}}$ starts edge addition from nodes with smaller degrees.


\subsection{Optimal Projection Parameter Selection}

To explore the selection of the optimal parameter $\theta$, we evaluate the performance of Algorithm 2 and Algorithm 3 with $\epsilon = \{1.0, 1.5, 2.0, 2.5, 3.0\}$. Table 3 shows the results. From Table 3, we can see that Algorithm 3 obtains the optimal $\theta$ when $\epsilon = 3.0$, which minimizes the MAE of the degree sequence release. Algorithm 3, however, may fail to guarantee finding the optimal $\theta$ that minimizes the total error, but rather approximates the optimal solution. This is because the degree loss caused by projection equals $|d_i - \theta|$ for users with degree $d_i \ge \theta$, and users with $d_i < \theta$ incur zero degree loss. The true degree loss from projection should be $d_i - \overline{d_i}$ as defined in Algorithm 2. Consequently, Algorithm 2 can identify the projection parameter that minimizes total error. However, obtaining $\overline{d_i}$ requires users to perform an additional round of local projection, resulting in the communication overhead.


\begin{table}
\centering
\caption{Optimal projection parameter $\theta$}
\begin{tabular}{cccccc}
\hline
\multicolumn{1}{c}{\textbf{Graph}} & $\varepsilon=1$ & $\varepsilon=1.5$ & $\varepsilon=2$ & $\varepsilon=2.5$ & $\varepsilon=3$ \\
\hline
\multicolumn{1}{c}{\textbf{Facebook}} & 1 & 15 & 25 & 34 & 42 \\
\multicolumn{1}{c}{\textbf{Wiki-Vote}} & 1 & 2 & 4 & 10 & 16 \\
\multicolumn{1}{c}{\textbf{CA-HepPh}} & 1 & 3 & 5 & 7 & 9 \\
\multicolumn{1}{c}{\textbf{Cit-HepPh}} & 1 & 9 & 15 & 20 & 24 \\
\multicolumn{1}{c}{\textbf{Email-Enron}} & 1 & 2 & 3 & 4 & 5 \\
\multicolumn{1}{c}{\textbf{Loc-Brightkite}} & 1 & 1 & 2 & 3 & 4 \\
\multicolumn{1}{c}{\textbf{Twitter}} & 1 & 8 & 15 & 21 & 27 \\
\multicolumn{1}{c}{\textbf{Com-dblp}} & 1 & 3 & 4 & 5 & 5 \\
\hline            
\end{tabular}
\end{table}

\subsection{Performance Comparison in Private Setting}
In the private setting, Figures 4-6 show the degree sequence and distribution release for $\text{\tt{LPEA-LOW}}$, $\text{\tt{LPEA-HIGH}}$, $\text{\tt{RANDOM-ADD}}$ and $\text{\tt{EDGE-REMOVE}}$. We first apply $\theta =  \{1, 2, \dots, 100\}$ and $\epsilon = 3$ on the above four methods and report the experimental result in Fig. 4. The result shows that the MAE and MSE of degree sequence releasing exhibit a convex trend as $\theta$ varies. This trend occurs because the $l_{AE}$ error caused by projection decreases monotonically with $\theta$ increasing, while the $l_{LAP}$ error from Laplace noise increases as $\theta$ varies. Thus, the two errors create an optimization trade-off that determines the optimal $\theta$ value.
From the results in Figure 4, we can see that $\text{\tt{LPEA-LOW}}$ consistently achieves lower MAE and MSE across the 8 datasets compared to the other three algorithms. This is because $\text{\tt{LPEA-LOW}}$ starts edge addition from neighboring nodes with smaller degrees, which is particularly effective in sparse graphs (e.g., Loc-Brightkite and Com-dblp). 
The MAE and MSE of degree sequence release of all the methods with the optimal $\theta$ from Table 3 are depicted in Fig. 5 while $\epsilon$ varies from 1 to 3. The results show that the MAE and MSE drop when $\epsilon$ decreases. The proposed solution $\text{\tt{LPEA-LOW}}$ achieves good accuracy over all datasets and outperforms all the other differential private methods. In the Loc-Brightkite dataset, for example, when the privacy budget is relatively large, e.g., $\epsilon=3.0$, its MAE and MSE always stay below or close to 6.1 and 40.0, respectively. When $\epsilon$ decreases, the accuracy drops but its MAE is still smaller than 11.0 even when $\epsilon=1.0$. The other methods do not perform as well as our proposed method because when adding edges, we consider the willingness of neighboring users whose degrees are relatively smaller.

Fig. 6 compares the quality of the resulting degree distributions of $\text{\tt{LPEA-LOW}}$ to those of $\text{\tt{LPEA-HIGH}}$, $\text{\tt{RANDOM-ADD}}$, and $\text{\tt{EDGE-REMOVE}}$ when setting \( \alpha = 0.1 \), \( \varepsilon = 3 \), and the optimal \( \theta \) from Table 3. The $\text{\tt{ORIGIN-GRAPH}}$ line represents the true degree distribution of the original graph. The experimental result shows that $\text{\tt{EDGE-REMOVE}}$ generally performs the worst, follow by $\text{\tt{RANDOM-ADD}}$ and $\text{\tt{LPEA-HIGH}}$. Looking at intermediate results, we note that $\text{\tt{EDGE-REMOVE}}$ is far from $\text{\tt{ORIGIN-GRAPH}}$. Our proposed method outperforms three other methods significantly, yielding a quite accurate degree distribution, especially for larger datasets. 


\begin{figure}[htbp]
  \centering
  \includegraphics[width=\textwidth]{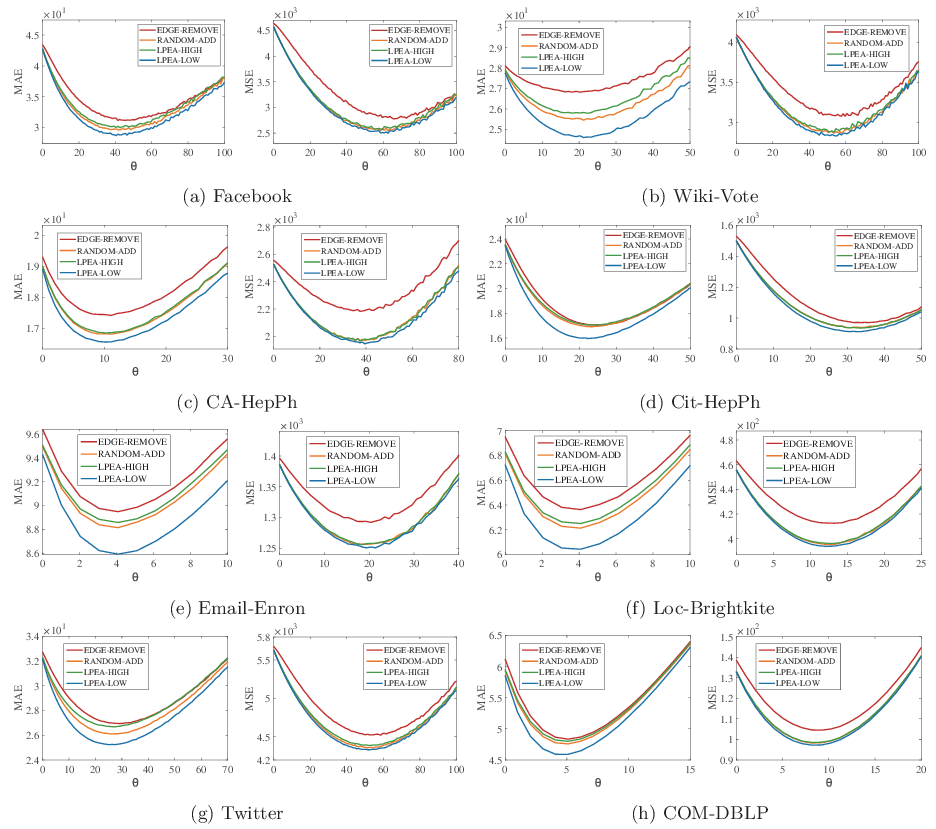}
  \caption{The MAE and MSE of algorithms on different datasets at $\epsilon=3.0$, varying $\theta$}
\end{figure}

\begin{figure}[htbp]
  \centering
  \includegraphics[width=\textwidth]{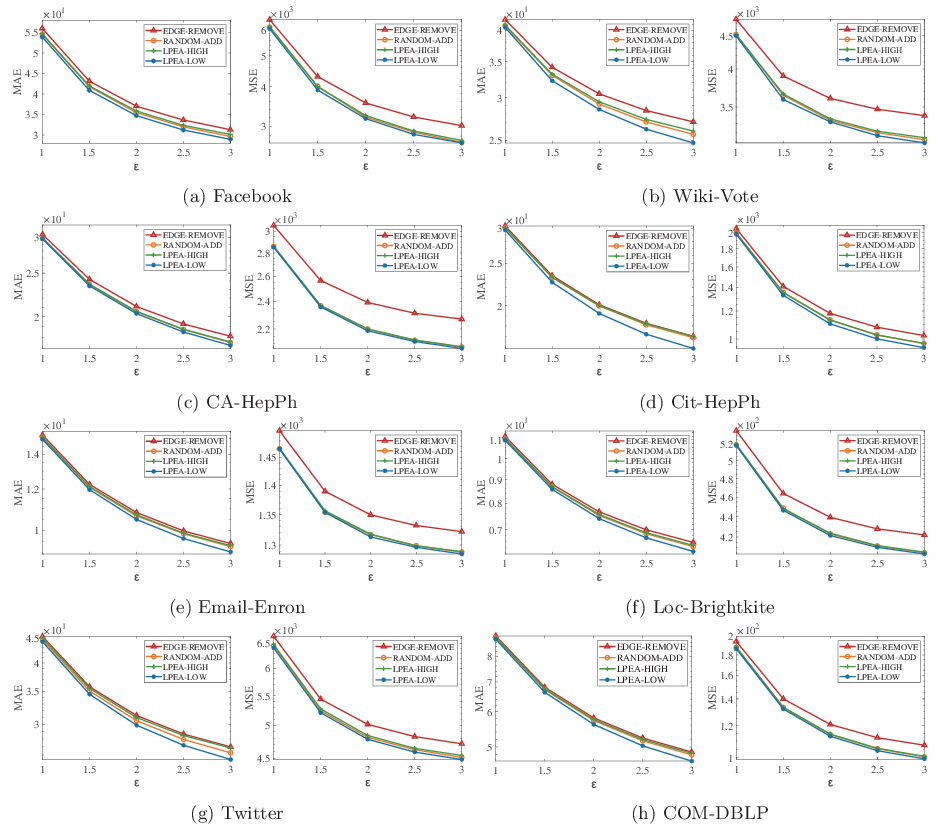}
  \caption{The MAE and MSE of algorithms on different datasets, varying $\epsilon$ }
\end{figure}

\begin{figure}[htbp]
  \centering
  \includegraphics[width=\textwidth]{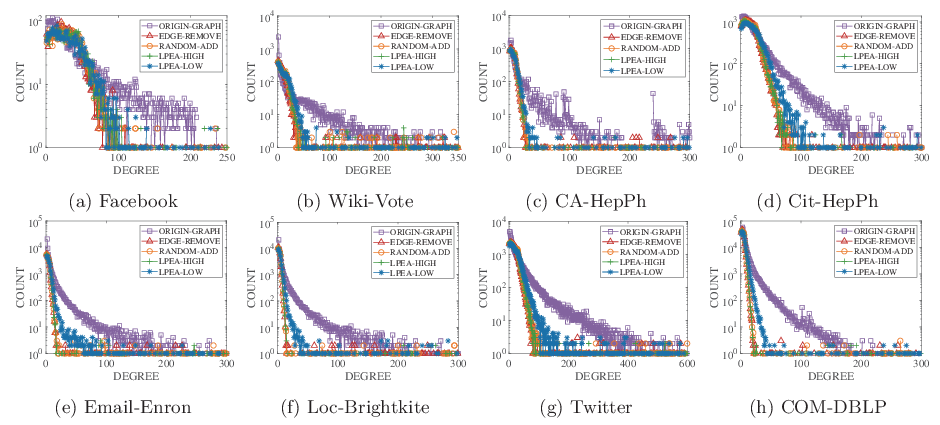}
  \caption{The comparison of degree distribution of algorithms on different datasets}
\end{figure}

\section{Related Work}
Applying node central differential privacy (node-CDP) to the degree distribution and degree sequence has been studied extensively. We argue that using node-CDP is more meaningful than edge-CDP, as node-CDP provides personal privacy control on graph data. The work in \cite{cite27} studies the problem of releasing degree sequence under edge-CDP, which employs the Laplace mechanism directly to generate $\text{Lap}\frac{2}{\epsilon}$ noise and performs a consistency inference to boost the accuracy of the noise sequence. Several subsequent works \cite{cite28, cite29} have explored releasing the degree sequence or distribution. However, these approaches become infeasible under node-CDP because the sensitivity may be unbounded. The techniques of using graph projection to bound the sensitivity are studied to release the degree distribution with node-CDP \cite{cite2, cite18, cite30}. The edge addition process, introduced in \cite{cite2}, randomly inserts each edge correlated to a node with a degree exceeding $\theta$. In contrast, the truncation approach \cite{cite18} eliminates all nodes with degree above $\theta$. Meanwhile, the edge deletion process \cite{cite30} iterates through all edges in random order and removes those linked to nodes with degrees surpassing $\theta$. The existing projection methods, however, are designed only for node-CDP and are not applicable under node-LDP and edge-LDP.

In the local setting, existing works almost use edge-LDP to study various graph statistics, such as degree distribution \cite{cite7}, subgraph counting \cite{cite10, cite11, cite12, cite13, cite14, cite15, cite16, cite17}, and synthetic graph generation \cite{cite6, cite31}. For example, the work in \cite{cite7} proposes the one-round method $\text{\tt{LF-GDRR}}$ with $\text{\tt{WRR}}$ \cite{cite21} mechanism to reconstruct the graph structure for answering statistical queries. Based on $\text{\tt{LF-GDRR}}$, $\text{\tt{Local2Rounds}}$ and $\text{\tt{ARRTwoNS}}$ methods are proposed in \cite{cite10, cite11}, respectively. The former reduces the triangle count error by introducing a download strategy, and the latter enhances the accuracy and reduces the communication cost of the former method by incorporating the Asymmetric Random Response ($\text{\tt{ARR}}$) and a conditional download strategy. Subsequently, $\text{\tt{WS}}$ \cite{cite12} solution is proposed for answering triangle and 4-cycle counting queries. This approach employs a wedge shuffle technique to achieve privacy amplification while reducing the error compared to $\text{\tt{Local2Rounds}}$ and $\text{\tt{ARRTwoNS}}$. Compared to $\text{\tt{Local2Rounds}}$, $\text{\tt{TCRR}}$ \cite{cite13} approach relies on a different postprocessing of $\text{\tt{WRR}}$ to answer triangle count queries, which can obtain tight upper and lower bounds on the variance of the resulting. Unlike \cite{cite10, cite11, cite12, cite13}, $\text{\tt{PRIVET}}$ \cite{cite14} introduces a federated triangle counting estimator that preserves privacy through edge relationship-LDP, specifically designed for correlated data collection scenarios. Meanwhile, $\text{\tt{DBE}}$ \cite{cite15} specializes in butterfly counting for bipartite graphs with edge-LDP, and $\text{\tt{OddCycleC}}$ \cite{cite16} targets cycle counting in degeneracy-bounded graphs. In contrast to the aforementioned works in the local model, several studies focus on generating synthetic graphs under edge-LDP. $\text{\tt{LDPGen}}$ \cite{cite6} proposes a novel multi-phase approach to synthetic
decentralized social graph generation, and $\text{\tt{sgLDP}}$ \cite{cite31} carefully estimates the joint distribution of attributed graph data under edge-LDP, while preserving general graph properties such as the degree distribution, community structure, and attribute distribution. 

Despite the above method using edge-LDP with lower sensitivity to protect graph statistics, they have the following limitations regarding degree sequence release: (1) edge-LDP provides a weaker privacy guarantee than node-LDP, and (2) each user in these methods needs to send an adjacency vector of length $n$, resulting in high communication costs. To address these limitations, $\text{\tt{ EDGE-REMOVE}}$ \cite{cite19, cite20} uses node-LDP with   encryption techniques to release the degree sequence and distribution, which employs an edge deletion process to project the degree of each node in the original graph. A key contribution of this work lies in its dual capability to simultaneously provide higher release accuracy while enhancing privacy guarantees.  However, from Fig. 2, we note that $\text{\tt{EDGE-REMOVE}}$ may lead to more edges deleted than necessary due to ignoring the willingness of each user. Moreover, the crypto-assisted parameter selection process requires each user to perform a round of local projection for each iteration, resulting in high communication costs. In comparison to existing works, our method $\text{\tt{CADR-LDP}}$ introduces an edge addition process with  encryption techniques, which further achieves better utility and satisfies node-LDP.

\section{Conclusion}
This work focused on preserving node privacy for releasing degree sequence and distribution under node-LDP with cryptographic techniques. To strike a better trade-off between utility and privacy, we first proposed an edge addition process, which locally bounded the degree of each user with the optimal projection parameter $\theta$. To obtain  $\theta$, two estimation methods were further proposed, where we introduce  encryption techniques and the exponential mechanism to provide the privacy guarantee. Finally, the extensive experiments on real-world datasets showed that our solution outperformed its competitors.

\begin{acks}
The research reported in this paper was supported by the National Natural Science Foundation of China (No. 62072156), the Natural Science Outstanding Youth Science Foundation Project of Henan Province (252300421061), and the Basic Research Special Projects of Key Research Projects in Higher Education Institutes in Henan Province (25ZX012). We appreciate the anonymous reviewers for their helpful
comments on the manuscript.
\end{acks}

\bibliographystyle{ACM-Reference-Format}
\bibliography{sample-base}

\appendix

\end{document}